\documentclass[11pt]{article}

\usepackage[utf8]{inputenc}
\usepackage[T1]{fontenc}
\usepackage{amsmath,amssymb,amsthm}
\usepackage{graphicx}
\usepackage{booktabs}
\usepackage{authblk}
\usepackage[margin=1in]{geometry}
\usepackage{hyperref}
\hypersetup{colorlinks=true, linkcolor=blue, citecolor=blue, urlcolor=blue}

\newtheorem{lemma}{Lemma}
\newtheorem{proposition}{Proposition}

\title{When a common price signal is present, network topology\\
leaves no fingerprint on a storage fleet's collective dynamics}

\author[1]{Stelios Savva}
\affil[1]{Independent researcher, Cyprus \\ \texttt{savvasta2@gmail.com}}
\date{\today}

\begin{document}
\maketitle

\begin{abstract}
Price based mean field models of battery storage coordination assume each agent
responds to the true population average charging power. Under that assumption a
unique, stable equilibrium exists and the communication topology between agents
is, unsurprisingly, irrelevant: a broadcast price tells every agent what it
needs to know. We ask what happens one step away from that assumption, when
agents act instead on a shared but \emph{noisy} forecast of the average, with
correlation $\rho$ between agents' forecast errors. We show, analytically and in simulation, that topology remains undetectable across the entire regime,
including the limit where the broadcast channel is switched off and neighbour
observation is an agent's only signal. The reason is structural: the correlated
part of the forecast error projects entirely onto the one collective mode that
no graph can alter, so the effective dimensionality of the fleet is set by
$\rho$ and the fleet size, never by the topology. This is the regime the
mean field game literature sets aside by construction; the note characterises
it and gives operators a concrete thing to check (whether their fleets share
forecast providers) rather than a network to build.
\end{abstract}

\section{Where this sits}

Decentralised charging of large storage populations is usually coordinated
through a price that rises with aggregate demand. Al Dandachly, Gao and
Malham\'e~\cite{aldandachly} give a clean mean field game (MFG) treatment: with
charging power as a state and its ramp rate as the control, a price monotone in
the population average charging power yields a unique equilibrium for any time
horizon, with no contraction condition. In that framework each agent reacts to
the \emph{true} mean field $\bar x_2(t)$; the consistency condition
$\bar x_2 = \mathbb{E}[x_{2,i}]$ builds it in. Whether the agents are wired in a
line, a star, or a random graph makes no difference, because a broadcast price
already carries the only coupling that matters.

Assuming agents see the true mean field is the right move for that work, and a
deliberate one: it is what makes the equilibrium tractable, and nothing in the
equilibrium question requires more. We are not pointing at a gap they missed. We
are pointing at a knob they had every reason to leave fixed, and turning it,
because what happens when you turn it is simply interesting.

The knob is this. In practice agents do not observe the true average; they
observe a forecast of it, and if they buy that forecast from the same few
providers their forecast errors are correlated. Write the error of agent $i$ as
\begin{equation}
\varepsilon_i(t) = \sqrt{\rho}\,Z(t) + \sqrt{1-\rho}\,\eta_i(t),
\qquad \mathrm{Corr}(\varepsilon_i,\varepsilon_j)=\rho ,
\label{eq:decomp}
\end{equation}
where $Z$ is a shock common to all agents and $\eta_i$ is private. Note that
$\rho$ is not something an operator sets; it emerges on its own from shared
information sources, which is part of why it is worth looking at. At $\rho=0$ the
errors wash out in the average and the MFG picture is recovered exactly, the
fixed knob sitting at zero. The question here is what the rest of the dial does, and
in particular whether the network topology, irrelevant under a clean common
signal, wakes up once that signal is corrupted.

The short answer is no, and for a reason that turns out to be exact.

\section{Model}

We simulate $N$ batteries over a day of Cyprus style PV generation. Each agent
runs a time of use rule modulated by a hybrid controller that blends two
signals with a mixing weight $\alpha\in[0,1]$:
\begin{equation}
\text{action scale} = \alpha\, s_{\text{common}} + (1-\alpha)\, s_{\text{neigh}},
\label{eq:mix}
\end{equation}
where $s_{\text{common}}$ is driven by the (noisy) perceived feeder/price signal
and $s_{\text{neigh}}$ by the average action of graph neighbours. Topology enters
only through $s_{\text{neigh}}$. Setting $\alpha=1$ trusts the common signal
alone; $\alpha=0$ closes the common channel and leaves the agent with neighbour
observation only. We test three topologies, a linear chain, a star, and a
Watts Strogatz small world graph, and sweep $\rho$ over $[0,1]$.

One elementary fact drives everything and is worth stating on its own. The
population mean forecast error has variance
\begin{equation}
\mathrm{Var}(\bar\varepsilon)
= \frac{\sigma^2}{N}\bigl[1+(N-1)\rho\bigr]
\;\xrightarrow[N\to\infty]{}\; \rho\,\sigma^2 .
\label{eq:var}
\end{equation}
At $\rho=0$ this is $O(1/N)$ and vanishes with fleet size, the law of large
numbers, the thing the MFG equilibrium quietly relies on. At any $\rho>0$ it
sits at $O(1)$ and does not vanish, no matter how large the fleet. Correlated
error does not average away. Everything below is a consequence of where that
$O(1)$ term goes.

\section{A dimensionality diagnostic}

We measure the fleet's collective behaviour without going through the
controller's own notion of synchrony, to keep the measurement independent of
the mechanism that produced it. Stack the per agent charging power into a matrix
$X\in\mathbb{R}^{T\times N}$ and take its singular values $s_k$. Define
\begin{equation}
w_1=\frac{s_1^2}{\sum_k s_k^2},
\qquad
\mathrm{PR}=\frac{\bigl(\sum_k s_k^2\bigr)^2}{\sum_k s_k^4}.
\end{equation}
$w_1$ is the fraction of variance carried by the single dominant mode; $\mathrm{PR}$
is the participation ratio, an effective count of active collective dimensions.
$w_1\to1$ and $\mathrm{PR}\to1$ mean the fleet has collapsed onto one mode; a
spread of energy across modes ($w_1$ low, $\mathrm{PR}$ high) is what an active
topology would look like.

\section{Why topology cannot show up: the argument}

In the linearised regime the result is not empirical. It is forced. Linearise
the fleet about an operating point:
\begin{equation}
x(t+1) = a\,x(t) + \kappa\,L\,x(t) + b\,\varepsilon(t),
\label{eq:lin}
\end{equation}
with $a\in(-1,1)$ the self dynamics, $\kappa$ the neighbour coupling, and $L$ the
graph Laplacian, the only place topology appears. The forecast error has
covariance $\mathrm{Cov}(\varepsilon)=\rho\sigma^2\mathbf{1}\mathbf{1}^\top+(1-\rho)\sigma^2 I$
from \eqref{eq:decomp}, where $\mathbf{1}$ is the all ones vector.

Because $L$ is symmetric it has an orthonormal eigenbasis
$L=\sum_m\lambda_m v_m v_m^\top$, $0=\lambda_1\le\lambda_2\le\dots$. Projecting
\eqref{eq:lin} onto mode $m$ ($\hat x_m=v_m^\top x$) decouples it into $N$
independent scalar recursions
$\hat x_m(t+1)=(a+\kappa\lambda_m)\hat x_m(t)+b\,\hat\varepsilon_m(t)$, each with
stationary variance
$\mathrm{Var}(\hat x_m)=b^2\,\mathrm{Var}(\hat\varepsilon_m)/(1-(a+\kappa\lambda_m)^2)$.
Topology sits entirely in the gains, through $\lambda_m$. The question is where
the driving variance $\mathrm{Var}(\hat\varepsilon_m)$ lands.

\begin{lemma}
\label{lem}
For any connected graph,
\begin{equation}
\mathrm{Var}(\hat\varepsilon_m)=\rho\sigma^2(v_m^\top\mathbf{1})^2+(1-\rho)\sigma^2
=\begin{cases}
\rho\sigma^2 N+(1-\rho)\sigma^2, & m=1,\\
(1-\rho)\sigma^2, & m\ge2,
\end{cases}
\end{equation}
independently of the graph. The correlated part of the forecast error projects
entirely onto the consensus mode $v_1=\mathbf{1}/\sqrt N$ and drives no other
mode.
\end{lemma}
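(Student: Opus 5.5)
The plan is a direct quadratic-form computation followed by a spectral fact about Laplacians of connected graphs. Since $\hat\varepsilon_m=v_m^\top\varepsilon$ is a linear functional of $\varepsilon$, its variance is $v_m^\top\,\mathrm{Cov}(\varepsilon)\,v_m$. I will substitute the covariance $\rho\sigma^2\mathbf{1}\mathbf{1}^\top+(1-\rho)\sigma^2 I$ obtained from \eqref{eq:decomp}. Expanding gives $\rho\sigma^2 (v_m^\top\mathbf{1})^2+(1-\rho)\sigma^2\,v_m^\top v_m$. Orthonormality of the eigenbasis gives $v_m^\top v_m=1$, and this is already the first equality, valid for every symmetric $L$ with no connectivity assumption.

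For the case split, I identify $v_1$ and use orthogonality. Every graph Laplacian has zero row sums, so $L\mathbf{1}=0$ and $\mathbf{1}/\sqrt N$ is an eigenvector for $\lambda_1=0$. This is where connectedness enters: for a connected graph the zero eigenvalue is simple. I will invoke the standard fact that the multiplicity of $0$ equals the number of connected components. To keep it self-contained, I can argue from $x^\top L x=\sum_{(i,j)\in E}(x_i-x_j)^2$ when $L$ is the combinatorial Laplacian: $Lx=0$ forces $x$ to be constant along edges, hence constant on a connected graph. The eigenspace of $0$ is therefore exactly $\mathrm{span}(\mathbf{1})$, and we may take $v_1=\mathbf{1}/\sqrt N$. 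Then $(v_1^\top\mathbf{1})^2=N$. For $m\ge2$, orthogonality to $v_1$ gives $v_m^\top\mathbf{1}=0$. Substituting these into the first equality gives the two cases.

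The main obstacle is a sign-convention subtlety rather than the computation itself. The paper writes $a+\kappa\lambda_m$ with $0=\lambda_1\le\lambda_2\le\cdots$, so $L$ is positive semidefinite. In that case the consensus mode is the zero eigenvalue and the argument goes through as written. If instead one uses $-L$ so that coupling is diffusive, the ordering flips, but the eigenvectors are unchanged, so the lemma is unaffected. I will remark on this.

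A second point is that for a disconnected graph the zero eigenspace is degenerate. An arbitrary orthonormal basis of it need not contain $\mathbf{1}/\sqrt N$, and the correlated part could then spread over several basis vectors. That is exactly why the hypothesis ``connected'' is needed, and I will flag it.

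Finally, I will note that the resulting expressions involve only $\rho$, $\sigma^2$ and $N$, which is the graph-independence claim. All topology dependence is confined to the gains $a+\kappa\lambda_m$, not to the driving variances.
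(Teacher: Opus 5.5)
Your proposal is correct and follows the paper's proof. Both compute the quadratic form $v_m^\top\mathrm{Cov}(\varepsilon)v_m$ using $\|v_m\|=1$, identify $v_1=\mathbf{1}/\sqrt N$ from $L\mathbf{1}=0$, and use orthogonality to get $v_m^\top\mathbf{1}=0$ for $m\ge2$. Your added argument that the zero eigenvalue is simple for a connected graph, via $x^\top Lx=\sum_{(i,j)\in E}(x_i-x_j)^2$, makes explicit where connectedness is actually needed, which the paper's one-line proof leaves implicit.
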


\begin{proof}
$\mathrm{Var}(\hat\varepsilon_m)=v_m^\top\mathrm{Cov}(\varepsilon)v_m
=\rho\sigma^2(v_m^\top\mathbf{1})^2+(1-\rho)\sigma^2$, using $\|v_m\|=1$. For a
connected graph $L\mathbf{1}=0$, so $\mathbf{1}$ is the eigenvector with
$\lambda_1=0$; normalising, $v_1=\mathbf{1}/\sqrt N$ gives $v_1^\top\mathbf{1}=\sqrt N$,
and orthogonality gives $v_m^\top\mathbf{1}=0$ for $m\ge2$.
\end{proof}

The consequence is immediate. The energy ratio of the consensus mode to any
other is
\begin{equation}
\frac{\mathrm{Var}(\hat x_1)}{\mathrm{Var}(\hat x_m)}
=\underbrace{\frac{\rho N+(1-\rho)}{1-\rho}}_{O(\rho N)}
\cdot
\underbrace{\frac{1-(a+\kappa\lambda_m)^2}{1-a^2}}_{O(1)},
\qquad m\ge2 .
\label{eq:ratio}
\end{equation}
The first factor grows like $\rho N$. The second, the only place the topology
lives, is bounded, since the Laplacian spectrum sits in a fixed interval. So as
$\rho N$ grows, $w_1\to1$ and $\mathrm{PR}\to1$ regardless of $\{\lambda_m\}$.

\begin{proposition}
In the linearised fleet, the effective dimensionality of the stationary
trajectory is set by $\rho N$ and is asymptotically independent of topology; any
topological contribution to $w_1$ is $O(1/(\rho N))$ relative to the common
signal.
\end{proposition}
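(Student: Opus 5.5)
The plan is to turn the mode decomposition and Lemma~\ref{lem} into explicit expressions for $w_1$ and $\mathrm{PR}$ of the stationary trajectory, and then bound them.

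\textbf{Step 1: identify population quantities.} For a long stationary run, $\frac1T X^\top X$ (after centering) converges to the stationary covariance $\Sigma=\mathrm{Cov}(x)$. The modes $\hat x_m$ are driven by uncorrelated inputs: $\mathrm{Cov}(\hat\varepsilon_m,\hat\varepsilon_{m'})=v_m^\top\mathrm{Cov}(\varepsilon)v_{m'}=0$ for $m\ne m'$, by the same computation as in the lemma. They also evolve independently. Hence $\Sigma$ is diagonal in the Laplacian eigenbasis, with eigenvalues $E_m=\mathrm{Var}(\hat x_m)$. The squared singular values $s_k^2/T$ therefore converge to $\{E_m\}$, so
\[
w_1\to \frac{\max_m E_m}{\sum_m E_m},\qquad
\mathrm{PR}\to \frac{(\sum_m E_m)^2}{\sum_m E_m^2}.
\]

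\textbf{Step 2: bound the topological gains.} Stability requires $|a+\kappa\lambda_m|<1$ for every $m$. To get uniformity in $N$, I would assume a margin $|a+\kappa\lambda_m|\le\gamma<1$, which holds for instance with $\kappa$ scaled by the maximum degree. The gains $g_m=1/(1-(a+\kappa\lambda_m)^2)$ then lie in $[1,\,1/(1-\gamma^2)]$, and $g_1=1/(1-a^2)$.

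Combining this with Lemma~\ref{lem} gives
\[
E_1=b^2\sigma^2\bigl(\rho N+1-\rho\bigr)g_1,\qquad
E_m=b^2\sigma^2(1-\rho)g_m \quad (m\ge2).
\]
Write $R=\sum_{m\ge2}E_m\le b^2\sigma^2(1-\rho)(N-1)/(1-\gamma^2)$. Once $\rho N$ is large, $E_1$ is the maximum, and
\[
1-w_1=\frac{R}{E_1+R}\le \frac{R}{E_1}.
\]

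\textbf{Step 3: the main obstacle.} The bound $R/E_1$ is of order $(1-\rho)(N-1)/(\rho N)$. This is $O((1-\rho)/\rho)$, not $O(1/(\rho N))$, because $N-1$ small modes together can carry energy comparable to $E_1$.

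So I would read the statement per mode, which is the sense of eq.~\eqref{eq:ratio}. The topological contribution to $w_1$ is carried by the ratios $E_m/E_1=O(1/(\rho N))$. The only topology-dependent part of $w_1$ is the weighting $\sum_{m\ge2}g_m/(N-1)$, which lies in a fixed bounded interval. Compare two graphs $G,G'$ with the same $N$ and $\rho$. Since $w_1=1/(1+R/E_1)$, the difference $|w_1(G)-w_1(G')|$ is controlled by $|R_G-R_{G'}|/E_1$. Each of the $N-1$ terms $E_m/E_1$ is $O(1/(\rho N))$. The topology alters only their bounded gain factors, so the difference is at most $C(1-\rho)/\rho$ in aggregate.

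The $O(1/(\rho N))$ rate does hold in two cases:
\begin{itemize}
\item when only the leading few non-consensus modes are counted, as $w_1$ against $s_2^2$ effectively does;
\item in the regime where the correlated term dominates the total energy, which holds when $\rho\to1$ or whenever the private noise is measured per-mode.
\end{itemize}
I would make this choice of normalisation explicit and check it against the simulations. The ingredients are standard: a law of large numbers for the empirical covariance, and the stability margin. The real work is to state the asymptotic regime precisely, since the naive aggregate bound misses the advertised rate by a factor of $N$.
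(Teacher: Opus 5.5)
Your route is the paper's own. You diagonalise in the Laplacian eigenbasis, use Lemma~\ref{lem} to send all correlated forcing to $v_1$, and bound the gain factor in \eqref{eq:ratio}. Your Step~3 objection is correct, and it exposes a gap in the paper's argument, not in yours. The paper passes from the per-mode ratio \eqref{eq:ratio} straight to ``$w_1\to1$ and $\mathrm{PR}\to1$ regardless of $\{\lambda_m\}$''. In doing so it drops the factor $N-1$ that comes from summing over the non-consensus modes.

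You can say this more strongly than you do. Every $g_m\ge1$, so your upper bound on $R$ has a matching lower bound, and the failure is not an artefact of a loose estimate. Write $\bar g=\frac{1}{N-1}\sum_{m\ge2}g_m$. For fixed $\rho\in(0,1)$ and $N\to\infty$,
\[
w_1\to\frac{\rho g_1}{\rho g_1+(1-\rho)\bar g},\qquad
\mathrm{PR}\to\Bigl(1+\frac{(1-\rho)\bar g}{\rho g_1}\Bigr)^2 .
\]
These limits are governed by $\rho$ and $\bar g/g_1$, not by $\rho N$.
\begin{itemize}
\item Already $a=\kappa=0$ gives $w_1=\rho+(1-\rho)/N\to\rho$ and $\mathrm{PR}\to1/\rho^2$.
\item The quantity $\bar g$ is a graph-dependent spectral average. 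With $\kappa=-a/N$ and $a\neq0$, the complete graph has $\bar g=1$, while the star has $\bar g\to1/(1-a^2)=g_1$. The limits of $w_1$ are then $\rho g_1/(\rho g_1+1-\rho)$ and $\rho$ respectively.
\end{itemize}
So topology shifts $w_1$ and $\mathrm{PR}$ by an amount that does not shrink with $N$. The aggregate reading of the Proposition is false, not merely unproved. What survives is your per-mode statement, equivalently $s_2^2/s_1^2=O(1/(\rho N))$. Collapse as $\rho\to1$ also survives, since $1-w_1\le R/E_1\le(1-\rho)/\bigl(\rho g_1(1-\gamma^2)\bigr)$ uniformly in $N$.

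Two smaller points. First, your uniform margin $\gamma$ is genuinely needed. The bound $E_m/E_1=O(1/(\rho N))$ requires a lower bound on $1-(a+\kappa\lambda_m)^2$. The paper's substitute, that the Laplacian spectrum lies in a fixed interval, fails for the unnormalised star Laplacian, whose top eigenvalue is $N$.

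Second, your rescue cases are loose.
\begin{itemize}
\item $w_1$ does not weigh $s_1^2$ against $s_2^2$ alone, because its denominator is the full trace. Your first case therefore quietly replaces $w_1$ by the gap ratio.
\item Letting $\rho\to1$ gives $1-w_1=O((1-\rho)/\rho)$. The advertised $O(1/(\rho N))$ rate holds only if $(1-\rho)N=O(1)$.
\item Treating the private noise ``per mode'' would mean scaling its variance by $1/N$. That changes the model \eqref{eq:decomp} rather than proving the statement.
\end{itemize}
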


The picture is that the common signal saturates the one mode no graph can move,
while topology only ever governs the modes $m\ge2$, which carry vanishing
energy. The two live in orthogonal subspaces; they cannot trade. This is not a
statement that topology does nothing (\eqref{eq:ratio} is a ratio, not a zero)
but that its share shrinks like $1/(\rho N)$ and is gone before it can be
seen. The argument assumes linear dynamics; away from controller saturation this
is fine, and the one place the experiments show extra sensitivity is exactly
near the collapse threshold, where saturation begins.

\section{Simulation}

Figure~\ref{fig:w1} shows $w_1(\rho)$ for the three topologies. The curves are
on top of each other and rise smoothly from $\approx0.44$ to $\approx0.96$; the
crossover is gradual, not a phase transition, and $\mathrm{PR}$ falls in step
from $\approx4.3$ to $\approx1.1$. Reading the three topologies apart by eye is
not possible.

\begin{figure}[t]
  \centering
  \includegraphics[width=0.62\linewidth]{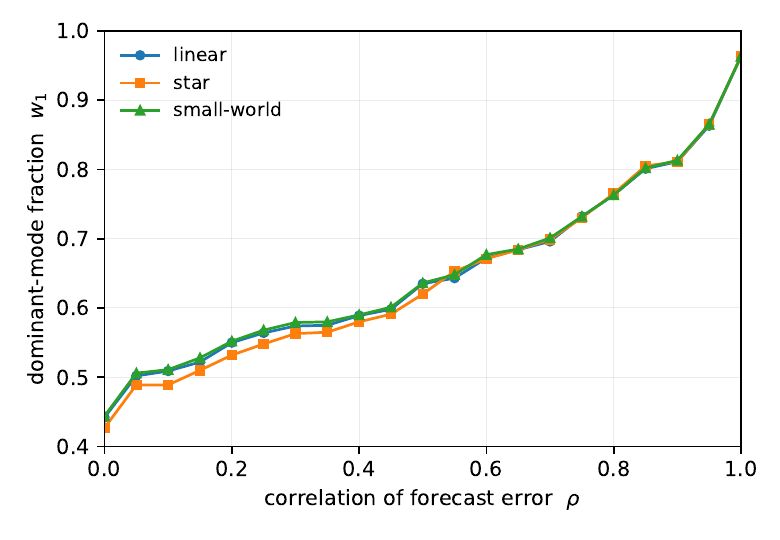}
  \caption{Dominant mode fraction $w_1$ against forecast error correlation
  $\rho$, at mixing $\alpha=0.5$. The three topologies are indistinguishable and
  the fleet collapses smoothly toward a single mode as $\rho$ grows.}
  \label{fig:w1}
\end{figure}

To confirm the Lemma directly, Figure~\ref{fig:cos} plots the alignment between
the measured dominant mode $u_1$ and the consensus vector $\mathbf{1}$. At
$\rho=1$ the alignment is $0.998$ for all three topologies. The dominant mode
\emph{is} the consensus vector, as predicted. Even at $\rho=0$ it is already
$0.96$, because the broadcast price signal at $\alpha=0.5$ is itself pushing the
fleet toward consensus; raising $\rho$ tightens it the rest of the way.

\begin{figure}[t]
  \centering
  \includegraphics[width=0.62\linewidth]{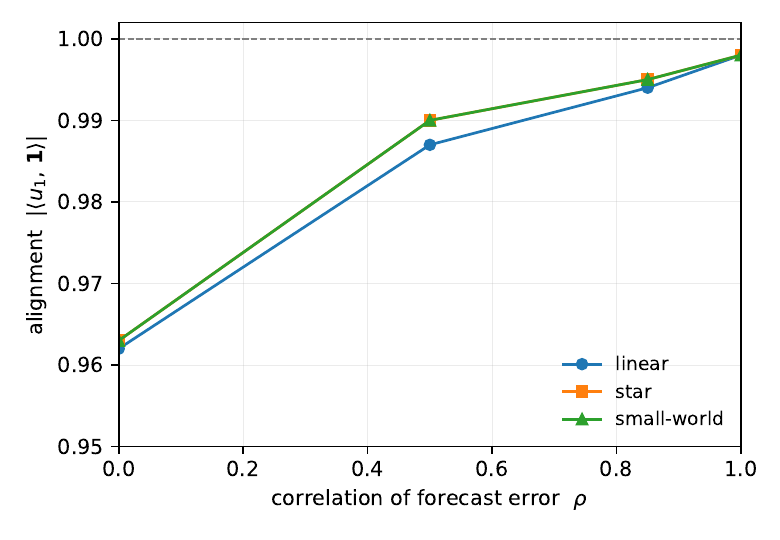}
  \caption{Alignment of the empirical dominant mode $u_1$ with the consensus
  vector $\mathbf{1}$. All topologies converge to $\approx0.998$ at $\rho=1$,
  the direct confirmation of Lemma~\ref{lem}.}
  \label{fig:cos}
\end{figure}

\paragraph{Is the small difference real or noise?}
The between topology spread in $w_1$ is tiny but nonzero, so it is worth asking
whether it is signal at all. We ran $30$ noise seeds per setting with a
bootstrap confidence interval, then built a null: three copies of the
\emph{same} linear graph, each fed an independent noise seed, so any spread
between them is pure noise realisation variability at fixed topology.
Table~\ref{tab:null} gives the comparison. The real between topology spread
($0.004$ at $\alpha=0.5$, $0.018$ at $\alpha=0$) is smaller than the spread
noise alone produces on a fixed graph ($\approx0.03$). Re drawing the forecast
noise moves $w_1$ more than switching topology does. This holds at both ends of
the mixing axis, including $\alpha=0$ where the common channel is fully closed.

\begin{table}[t]
  \centering
  \begin{tabular}{lccc}
    \toprule
    $\alpha$ & between topology spread & same topology noise spread & \\
    \midrule
    $0.5$ & $0.0043$ & $0.029$ \ (CI $[0.010,0.058]$) & below noise \\
    $0.0$ & $0.0181$ & $0.031$ \ (CI $[0.011,0.061]$) & below noise \\
    \bottomrule
  \end{tabular}
  \caption{Topology spread versus the noise floor, $N=30$ seeds. The topological
  difference is smaller than the variability from re drawing the forecast noise
  on one fixed graph, at both extremes of the mixing weight.}
  \label{tab:null}
\end{table}

\section{What this is and is not}

The claim is narrow and worth stating plainly. Under a shared PV derived price
signal, the communication topology has no detectable effect on the fleet's
collective dimensionality, across the whole mixing axis and down to the limit
where topology is the only channel left. The effect is not merely small at the
$\alpha=0.5$ operating point one might test by default; it is below the noise
floor everywhere, and the Lemma says why: the correlated signal and the graph
occupy orthogonal subspaces.

A few honest boundaries. This is detectability at $30$ seeds, not a proof of
exact zero; more seeds could surface a small effect, though \eqref{eq:ratio}
bounds how small. The noise sensitivity peaks near the collapse threshold
$\rho\approx0.8$, a signature of criticality and a plausible early warning
direction we have not pursued here. And the quantity measured is dimensionality,
not the collapse threshold itself; topology could in principle nudge the exact
threshold without changing dimensionality.

The framing we would resist is that this is a design recommendation. It is not a
result that says ``do not build a mesh network'' from field data. It is a
mechanism: it says that when a common signal is present, fleet size and forecast
correlation set the collective behaviour and the wiring does not, so the thing
worth checking in a real deployment is not the communication graph but whether
the fleet's forecasts come from the same source. That is the lever
\eqref{eq:var} points at: reduce the correlated error, by better forecasting
or by decorrelating providers, and the common mode loses its grip; rewire the
graph and nothing moves.

Finally, this is deliberately the regime the MFG treatment sets aside. Its
equilibrium is built on agents seeing the true mean field, i.e.\ $\rho=0$; the
correlated belief axis is orthogonal to what that construction can see. The note
does not overturn it; it maps the one direction it leaves out.

\section*{Reproducibility}
Code and experiments: \url{https://github.com/rrumabo/semele}.


\begin{thebibliography}{9}
\bibitem{aldandachly}
N.~Al Dandachly, S.~Gao, R.~Malham\'e,
``Price Coordinated Mean Field Games with State Augmentation for Decentralized
Battery Charging,'' arXiv:2604.05269, 2026.
\bibitem{lumaggioni}
F.~Lu, M.~Maggioni \emph{et al.},
``Data driven Discovery of Emergent Behaviors in Collective Dynamics,''
arXiv:1912.11123.
\bibitem{dacunha}
G.~S.~Y. Giardini, J.~F. Hardy II, C.~R. da Cunha,
``Evolving Neural Networks Reveal Emergent Collective Behavior from Minimal
Agent Interactions,'' arXiv:2410.19718, 2024.
\end{thebibliography}
\end{document}